\documentclass[journal,letterpaper]{IEEEtran}

\usepackage[cmex10]{amsmath}
\usepackage{amssymb}
\usepackage{amsfonts}
\usepackage{amsthm}
\usepackage{bm}
\usepackage{graphicx}
\usepackage{cite}
\usepackage{pgfplots}
\usepgfplotslibrary{fillbetween}
\pgfplotsset{compat=1.18}
\usepackage{amsthm}
\newtheorem{theorem}{Theorem}

\theoremstyle{definition}

\begin{document}

\title{USAM: A Unified Safety–Age Metric for Timeliness in Heterogeneous IoT Systems}

\author{Mikael~Gidlund,~\IEEEmembership{Fellow,~IEEE}
\thanks{M. Gidlund is with Department of Computer and Electrical Engineering. Email: \texttt{mikael.gidlund@miun.se}}}

\maketitle

\begin{abstract}
Massive Internet-of-Things (IoT) deployments must simultaneously support monitoring, control, and safety-critical communication over shared wireless infrastructure. Classical timeliness metrics, such as Age of Information and its variants, quantify the freshness of received updates but do not account for deterministic safety timing requirements that arise in cyber-physical systems. Consequently, freshness-oriented metrics may indicate satisfactory performance even when worst-case timing guarantees required by functional safety standards are violated. This paper introduces the Unified Safety--Age Metric (USAM), a safety-aware timeliness metric that integrates information freshness, deadline reliability, and deterministic response-time feasibility into a single architecture-aware performance measure. We consider heterogeneous IoT traffic served by a gateway with intermittent receiver readiness and analyze system behavior in the ultra-sparse regime typical of massive machine-type communications. The analysis shows that, as device activity decreases, queueing delays become negligible and system timeliness becomes dominated by infrastructure readiness and deterministic response-time constraints. In this regime, feasibility is determined primarily by the receiver duty cycle rather than by average traffic load. Numerical results illustrate the safety-blindness of classical freshness metrics and demonstrate that USAM explicitly captures the feasibility boundary imposed by heterogeneous traffic requirements. The proposed framework provides a foundation for analyzing safety-aware communication architectures in large-scale IoT systems.
\end{abstract}

\begin{IEEEkeywords}
Age of Information, Internet of Things, mixed-criticality systems, timeliness metrics, wireless networks
\end{IEEEkeywords}

\section{Introduction}

\IEEEPARstart{M}{assive} Internet-of-Things (IoT) deployments envisioned for sixth-generation (6G) wireless systems will support extremely large populations of devices generating sporadic short packets for applications such as environmental monitoring, smart cities, precision agriculture, intelligent transportation, and industrial automation. In such systems, the number of connected devices may reach millions per square kilometer. In contrast, only a small fraction is active at any given time due to the sporadic nature of machine-type traffic. This behavior is commonly described through an activity factor $\rho$ representing the fraction of devices generating traffic at a given time. Massive IoT systems, therefore, operate in an ultra-sparse regime in which the aggregate traffic load remains small even though the total number of connected devices is extremely large~\cite{GidlundEPB}.

The ultra-sparse operating regime changes the interaction between traffic dynamics and communication infrastructure. In particular, the readiness of the receiver chain at the base station determines whether incoming packets can be processed immediately or must wait until the reception pipeline becomes available. In many practical IoT deployments, the receiver may not operate continuously due to energy constraints, hardware limitations, or system-level scheduling policies~\cite{GidlundEPB}. As a result, the effective packet processing capability of the infrastructure depends not only on the nominal service rate of the receiver but also on the fraction of time during which the reception chain is active. The operational duty cycle of the receiver, therefore, directly influences queueing behavior and achievable timeliness guarantees in ultra-sparse IoT systems.

The timeliness of information delivery in communication networks has been studied extensively through the framework of \emph{Age of Information (AoI)}, which measures the time elapsed since the generation of the most recently received update~\cite{KaulINFOCOM,YatesAoI}. AoI has become a standard metric for monitoring systems and has been analyzed under a wide range of queueing models, scheduling policies, and sampling strategies~\cite{AoISurvey,KostaAoI}. These studies established the foundations of freshness-aware communication and highlighted the limitations of classical delay metrics in settings where information updates are generated continuously.

Several extensions of AoI have been proposed to capture additional aspects of information timeliness. Examples include synchronization-oriented formulations such as the \emph{Age of Synchronization (AoS)}~\cite{ChenAoS}, correctness-aware metrics such as the \emph{Age of Incorrect Information (AoII)}~\cite{AoII}, control-oriented formulations such as the \emph{Age of Control (AoC)}, which relates communication delay to the performance of networked control systems~\cite{AoC}, and application-aware metrics such as the \emph{Value of Information (VoI)}~\cite{ChiariottiVoI}. These developments illustrate the diversity of timeliness metrics tailored to specific system objectives.

Recent studies emphasize that timing should be treated as a fundamental design dimension of future wireless systems rather than merely a byproduct of delay performance~\cite{PopovskiTime6G}. In cyber-physical IoT systems, the usefulness of transmitted information depends on the interaction between sensing, communication, and actuation processes. Communication systems must therefore be evaluated with respect to how effectively they support time-sensitive system behavior.

Many emerging IoT applications operate in cyber-physical environments where communication delays directly affect physical processes. Examples include smart infrastructure, autonomous transportation systems, robotics, and industrial automation. Such systems must support heterogeneous traffic classes ranging from low-rate monitoring streams to latency-critical control loops and safety functions. In many industrial settings, these requirements are formalized through functional safety standards that impose deterministic timing guarantees.

The International Electrotechnical Commission (IEC) standard 61784-3 specifies communication profiles for safety fieldbuses and defines \emph{Safety Integrity Levels (SILs)} together with associated \emph{Safety Functional Response Time (SFRT)} constraints~\cite{IEC61784}. Compliance with these standards requires that safety functions be executed within strictly bounded time intervals. Wireless sensor--actuator networks used in factory automation must therefore guarantee deterministic cycle times, low jitter, and extremely high reliability to support safety functions involving human--machine interaction~\cite{DoebbertIOLW}. These constraints impose deterministic limits on end-to-end communication delays.

Providing such guarantees remains challenging in wireless communication systems. Technologies such as Time-Sensitive Networking (TSN) enable bounded latency scheduling in wired Ethernet networks, but extending these concepts to heterogeneous wireless environments is significantly more complex due to channel variability, contention-based access, and stochastic delays. Similarly, ultra-reliable low-latency communication (URLLC) services envisioned for 5G and future 6G systems primarily focus on probabilistic latency guarantees rather than deterministic safety response times. These limitations highlight the need for system-level metrics that jointly capture stochastic timeliness and deterministic safety feasibility in heterogeneous IoT deployments.

This paper introduces the \emph{Unified Safety--Age Metric (USAM)}, a system-level feasibility metric that captures the interaction between information freshness, deadline reliability, and deterministic safety timing in heterogeneous IoT deployments. The metric integrates three complementary dimensions of timeliness: a freshness component capturing synchronization quality across monitoring sources, a reliability component reflecting deadline satisfaction and Safety Integrity Level requirements, and a deterministic component capturing worst-case response-time feasibility relative to the Safety Functional Response Time. The objective of USAM is not to introduce another stochastic timeliness metric, but to provide an architectural feasibility measure that determines whether a communication infrastructure can simultaneously support freshness, reliability, and deterministic safety requirements in mixed-criticality IoT systems.

The analysis shows that massive IoT systems operating in the ultra-sparse regime exhibit a structural transition in how timeliness is governed. When the activity factor is small, queueing delays become negligible, and system behavior is dominated by infrastructure readiness and deterministic safety constraints rather than by traffic load. This observation leads to a simple infrastructure design principle: the duty cycle of the receiver infrastructure must exceed a minimum threshold determined jointly by queue-stability and worst-case response-time feasibility conditions.

The main contributions of this paper are summarized as follows:

\begin{itemize}

\item We introduce the Unified Safety--Age Metric (USAM), which integrates information freshness, deadline reliability, and deterministic response-time feasibility into a unified formulation for heterogeneous IoT systems.

\item We derive asymptotic expressions for the components of USAM in the ultra-sparse operating regime under intermittent receiver operation.

\item We characterize the feasibility conditions under which freshness, reliability, and deterministic safety constraints can be simultaneously satisfied and derive an explicit expression for the minimum safe receiver duty cycle.

\item We show that classical timeliness metrics such as Age of Information arise as special cases of the proposed framework when deterministic safety constraints are not considered.

\end{itemize}

The remainder of this paper is organized as follows. Section~II introduces the system model for heterogeneous IoT traffic under intermittent service operation. Section~III defines the Unified Safety--Age Metric and its components. Section~IV analyzes asymptotic feasibility in the ultra-sparse regime. Section~V presents the numerical evaluation. Section~VI concludes the paper.
\section{System Model}

We consider a heterogeneous Internet-of-Things (IoT) system in which a large population of distributed devices communicates with a common gateway or base station (BS) through a shared wireless infrastructure. The system supports multiple traffic classes with heterogeneous timing requirements, reflecting the coexistence of monitoring, control, and safety-related communication typical of modern IoT deployments. Monitoring streams coexist with control traffic and safety alarms while sharing the same communication resources. The system abstraction used throughout the paper is illustrated in Fig.~\ref{fig:system_model}.

\begin{figure}[t]
\centering
\includegraphics[width=0.95\linewidth]{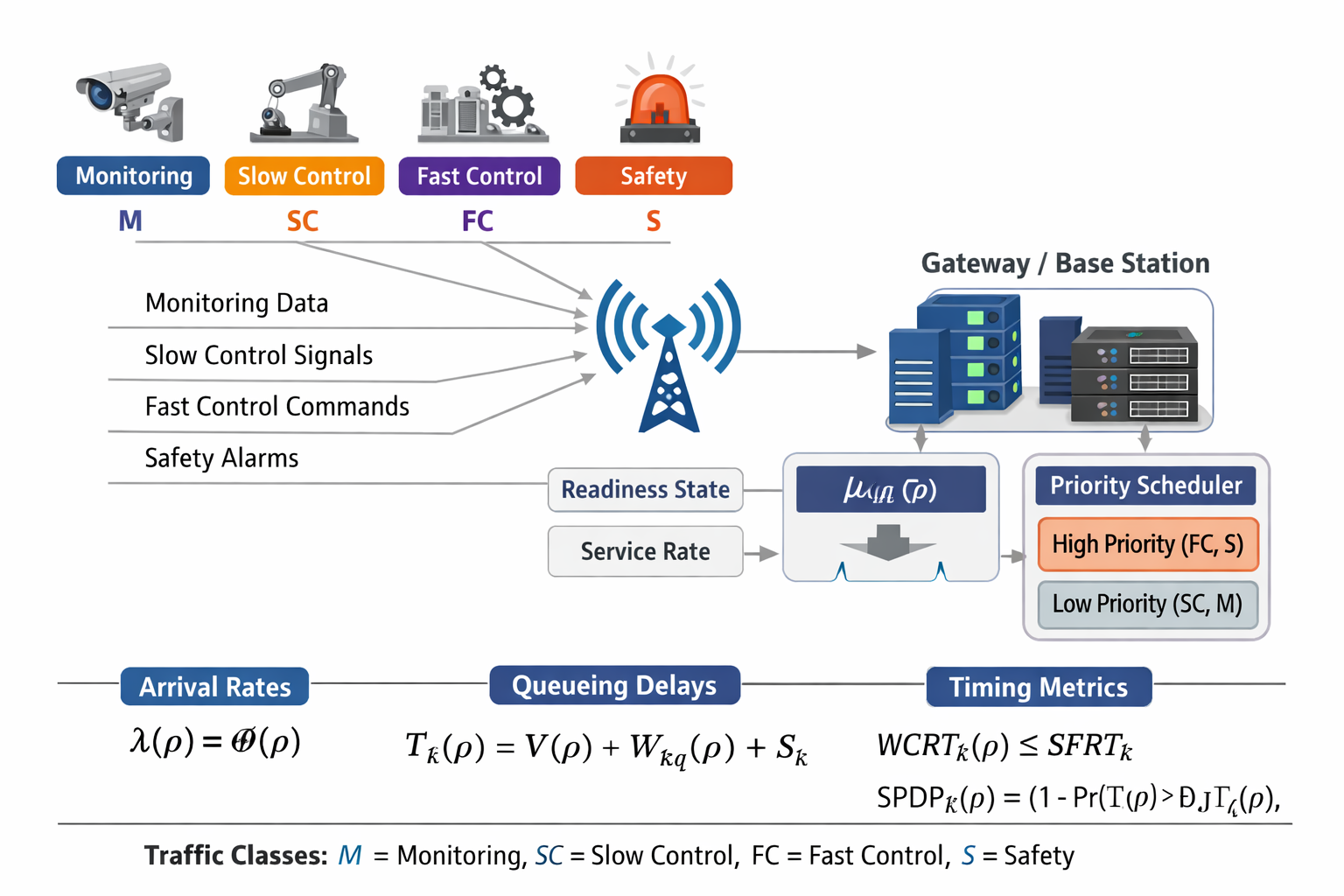}
\caption{System model of the heterogeneous IoT uplink. Multiple traffic classes share a common wireless infrastructure and are processed by a gateway with intermittent readiness. Packets are served by a fixed-priority scheduler with class-dependent timing constraints.}
\label{fig:system_model}
\end{figure}

Let
\begin{equation}
\mathcal{K}=\{M,SC,FC,S\}
\end{equation}
denote the set of traffic classes corresponding to monitoring ($M$), slow control ($SC$), fast control ($FC$), and safety ($S$) traffic. These classes reflect the hierarchy of timing requirements commonly encountered in industrial automation and cyber--physical systems, where low-rate monitoring streams coexist with time-critical control loops and safety-related messages.

Packets of class $k\in\mathcal{K}$ arrive according to a stationary arrival process with rate $\lambda_k(\rho)$, where $\rho$ denotes the activity factor of the system, defined as the long-term fraction of devices generating traffic. The aggregate arrival rate is therefore
\begin{equation}
\lambda(\rho)=\sum_{k\in\mathcal{K}}\lambda_k(\rho).
\end{equation}

The analytical development does not rely on a specific arrival distribution beyond stationarity and finite first and second moments. In the numerical evaluation, arrivals are modeled as Poisson processes, which yields an $M/G/1$ queue with intermittent service availability.

We focus on the ultra-sparse operating regime characteristic of massive IoT systems, in which only a small fraction of devices is active at any given time. In this regime the aggregate traffic load scales proportionally with the activity factor,
\begin{equation}
\lambda(\rho)=\Theta(\rho), \qquad \rho\rightarrow0 .
\end{equation}

At the gateway, the reception chain is modeled as a single-server queue with intermittent service availability. When the receiver is active, packets are processed with rate $\mu>0$, corresponding to the nominal service time
\begin{equation}
S=\frac{1}{\mu}.
\end{equation}

Service requirements may differ across traffic classes. Let $S_k$ denote the service time of a class-$k$ packet and assume that $S_k$ is a random variable with bounded support
\begin{equation}
0\le S_k\le C_k<\infty ,
\end{equation}
where $C_k$ represents the maximum service requirement of class $k$.

The receiver may not operate continuously. Instead, the reception chain may be intermittently available due to energy constraints, hardware limitations, or system-level duty-cycling policies. Let
\begin{equation}
R(t)\in\{0,1\}
\end{equation}
denote the receiver's readiness state, where $R(t)=1$ indicates that the gateway can process packets. The long-term receiver duty cycle is defined as
\begin{equation}
\delta=\lim_{T\to\infty}\frac{1}{T}\int_0^T R(t)\,dt .
\end{equation}

When the receiver is active, packets are served at a rate $\mu$, resulting in the effective service rate
\begin{equation}
\mu_{\mathrm{eff}}=\mu\,\delta .
\end{equation}

Packets arriving while the receiver is inactive experience an activation delay before service begins. Let $V(\delta)$ denote this activation latency. In the worst case, a packet arrives immediately after the receiver has deactivated, which yields the bound
\begin{equation}
V(\delta)\le V_{\max}(1-\delta).
\label{eq:Vmax_delta}
\end{equation}

Let $W_k^{(q)}(\rho)$ denote the queueing delay experienced by a class-$k$ packet excluding activation latency. The response time of class $k$ is therefore
\begin{equation}
T_k(\rho)=V(\delta)+W_k^{(q)}(\rho)+S_k .
\end{equation}

Packets are served according to a fixed-priority preemptive scheduling discipline reflecting the timing hierarchy of industrial control systems,
\begin{equation}
S \succ FC \succ SC \succ M .
\end{equation}

For a class $k$, let
\begin{equation}
\mathrm{hp}(k)=\{i\in\mathcal{K}\mid \text{priority}(i)>\text{priority}(k)\}
\end{equation}
denote the set of traffic classes with higher priority.

For safety analysis, we consider the worst-case response time (WCRT), defined as the maximum delay a packet may experience under the scheduling policy. Under fixed-priority scheduling, the deterministic bound for class $k$ is
\begin{equation}
B_k(\delta)=V_{\max}(1-\delta)+\sum_{i\in\mathrm{hp}(k)}C_i+C_k .
\label{eq:Bk}
\end{equation}

Safety-critical traffic must additionally satisfy deterministic Safety Functional Response Time (SFRT) constraints defined in functional safety standards such as IEC~61784-3. The SFRT specifies the maximum allowable time between the occurrence of a hazardous event and the completion of the corresponding protective action. Safety feasibility, therefore, requires
\begin{equation}
B_k(\delta)\le \mathrm{SFRT}_k .
\end{equation}
\section{Unified Safety--Age Metric}

The system model introduced in the previous section captures heterogeneous IoT traffic composed of monitoring, control, and safety-related communication flows. These traffic classes impose fundamentally different timing requirements. Monitoring applications primarily require information freshness, control applications require bounded response times to preserve closed-loop stability, and safety-critical functions must satisfy deterministic timing constraints imposed by functional safety standards. Existing timeliness metrics address these requirements only partially. Freshness metrics such as Age of Information (AoI) quantify the staleness of received updates, while deadline-based reliability metrics evaluate the probability that packets meet specified delay constraints. Control-oriented formulations relate communication delay to control-loop performance. However, these formulations remain fundamentally stochastic because they evaluate expectations or probabilities of delay distributions.

Functional safety constraints differ in nature. Industrial safety standards characterize the timing of protective actions through the Safety Functional Response Time (SFRT), defined as the maximum allowable time between the occurrence of a hazardous event and the completion of the corresponding safety function~\cite{IEC61784}. In networked cyber-physical systems this interval includes sensing, communication, processing, and actuation stages. Safe operation therefore requires deterministic guarantees on the maximum communication delay. In real-time systems such guarantees are expressed through the Worst-Case Response Time (WCRT), which represents the maximum delay experienced by a task or message under all admissible scheduling conditions. Using the response-time bound derived in the system model, safety feasibility requires
\begin{equation}
B_k(\delta) \le \mathrm{SFRT}_k .
\end{equation}

This deterministic constraint differs fundamentally from stochastic delay metrics commonly used in communication networks. Expectation-based measures such as $\mathbb{E}[T]$ or $\mathbb{E}[\Delta]$ may remain small even when the delay distribution admits realizations exceeding the safety limit. Functional safety instead requires the strict condition
\begin{equation}
P(T_k > \mathrm{SFRT}_k)=0 ,
\end{equation}
which cannot be guaranteed by expectation-based freshness metrics alone. Consequently, communication systems optimized according to stochastic timeliness metrics may appear to operate satisfactorily while violating deterministic safety constraints.

Figure~\ref{fig:capability_map} summarizes the structural difference between stochastic timeliness metrics and deterministic safety feasibility. AoI- and PAoI-type metrics remain safety-blind because they quantify information staleness without incorporating worst-case timing guarantees. VoI and AoC incorporate additional system semantics but remain below the deterministic safety boundary. The proposed Unified Safety--Age Metric extends the metric space by explicitly incorporating deterministic response-time feasibility.

\begin{figure}[t]
\centering
\includegraphics[width=0.92\linewidth]{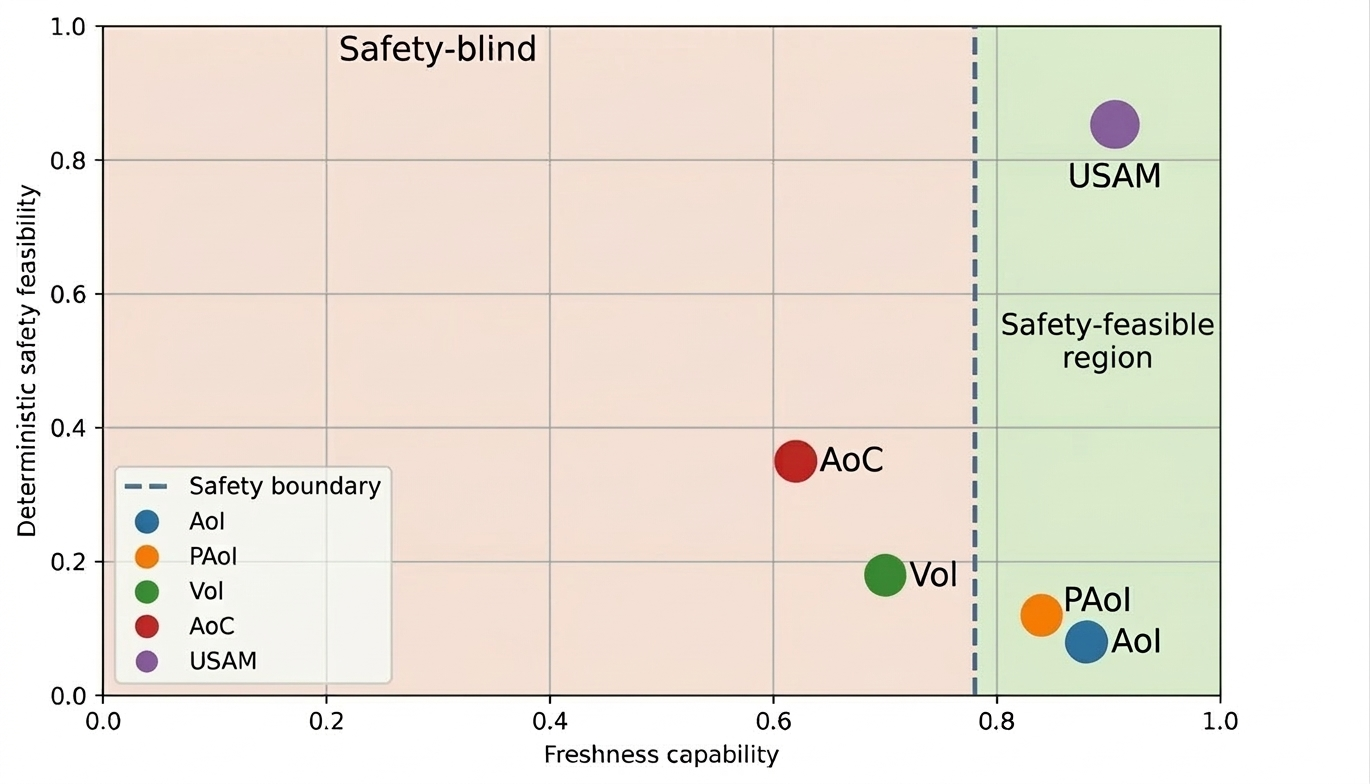}
\caption{Conceptual capability space of representative timeliness metrics. Freshness-oriented metrics operate in a safety-blind region because they do not incorporate deterministic safety constraints. The proposed Unified Safety--Age Metric (USAM) operates in the safety-feasible region by jointly capturing freshness, reliability, and deterministic safety timing.}
\label{fig:capability_map}
\end{figure}

To jointly capture freshness, reliability, and deterministic safety timing, we introduce the \emph{Unified Safety--Age Metric (USAM)}. Unlike classical age-based formulations that quantify average timeliness, USAM evaluates whether a communication infrastructure can simultaneously satisfy freshness, reliability, and deterministic timing requirements. The metric, therefore, acts as a system-level feasibility measure for heterogeneous IoT deployments.

Monitoring traffic is evaluated through the synchronization quality of received updates. Let $\mathcal{K}_M \subseteq \mathcal{K}$ denote the set of monitoring classes and let $\mathbb{E}[\mathrm{AoS}(\rho)]$ denote the average Age of Synchronization at the gateway. Let $\Delta_{\mathrm{tar}}>0$ denote a target synchronization level representing the maximum acceptable update age. The normalized freshness factor is defined as
\begin{equation}
F(\rho)=
\min\!\left(
1,\,
\frac{\Delta_{\mathrm{tar}}}{\mathbb{E}[\mathrm{AoS}(\rho)]}
\right).
\end{equation}

Delay-sensitive traffic classes require reliable packet delivery within specified deadlines. Let $D_k>0$ denote the deadline associated with class $k$. Recall that the Safe Packet Delivery Probability (SPDP) of class $k$ is defined as
\begin{equation}
\mathrm{SPDP}_k(\rho)=
\left(
1-\Pr\{T_k(\rho)>D_k\}
\right)\Gamma_k ,
\end{equation}
where $\Gamma_k\in(0,1]$ represents a residual safety factor reflecting Safety Integrity Level requirements and other non-communication failure modes. Since heterogeneous IoT systems must satisfy the most stringent reliability requirement among all traffic classes, the reliability component of the unified metric is defined as
\begin{equation}
R(\rho)=
\min_{k\in\mathcal{K}}
\mathrm{SPDP}_k(\rho).
\end{equation}

Safety-critical operation additionally requires deterministic timing guarantees. Let $\mathrm{SFRT}^{*}>0$ denote a reference Safety Functional Response Time used for normalization. The deterministic safety component is defined as
\begin{equation}
S(\rho)=
\exp\!\left(
-
\frac{\max_{k\in\mathcal{K}} B_k(\delta)}
{\mathrm{SFRT}^{*}}
\right),
\end{equation}
where $B_k(\delta)$ denotes the worst-case response-time bound of class $k$.

Since $F(\rho)$, $R(\rho)$, and $S(\rho)$ are normalized quantities in the interval $[0,1]$, the unified metric is defined as the weighted geometric aggregation
\begin{equation}
\Psi(\rho)=
F(\rho)^{w_1}
R(\rho)^{w_2}
S(\rho)^{w_3},
\label{eq:USAM}
\end{equation}
where $w_1,w_2,w_3 \ge 0$ are design parameters controlling the relative importance of freshness, reliability, and deterministic safety timing.

The multiplicative structure reflects the joint nature of the requirements: freshness, reliability, and deterministic safety must simultaneously hold for safe system operation. Consequently, degradation in any individual dimension reduces the overall metric value. The use of the minimum reliability across traffic classes ensures that the most demanding traffic class governs deadline feasibility, while the maximum worst-case response time captures the dominant deterministic safety bottleneck.

The Unified Safety--Age Metric satisfies several structural properties. Since each component lies in the interval $[0,1]$, the metric is bounded as $0 \le \Psi(\rho) \le 1$. Furthermore, for nonnegative weights the metric is monotone with respect to each component, implying that improvements in freshness, reliability, or deterministic safety timing cannot reduce the overall metric value.

The unified formulation also clarifies the relationship to existing timeliness metrics. If $w_2=w_3=0$, the metric reduces to a normalized freshness measure determined by the Age of Synchronization. If $w_3=0$, the formulation captures freshness and probabilistic deadline reliability but ignores deterministic safety timing. Conversely, when monitoring traffic is absent, the formulation reduces to a reliability--safety metric governed by deadline reliability and worst-case response-time feasibility. These special cases illustrate that widely used timeliness metrics correspond to partial projections of the unified formulation.

The next section analyzes the asymptotic behavior of $\Psi(\rho)$ under intermittent service in the ultra-sparse operating regime.
\section{Asymptotic Feasibility in the Ultra-Sparse Regime}

This section analyzes the behavior of the Unified Safety--Age Metric introduced in Section~III under the ultra-sparse operating regime of massive IoT systems. In such systems, only a small fraction of devices are active at any given time, and the aggregate traffic load therefore scales proportionally with the activity factor $\rho$. The objective is to determine how the freshness, reliability, and deterministic safety components of the unified metric behave as $\rho \rightarrow 0$.

Let $\lambda_k(\rho)$ denote the arrival rate of class $k$ and define the aggregate arrival rate
\begin{equation}
\lambda(\rho)=\sum_{k\in\mathcal{K}}\lambda_k(\rho).
\end{equation}

In the ultra-sparse regime, the traffic load satisfies
\begin{equation}
\lambda(\rho)=\Theta(\rho), \qquad \rho\rightarrow0 .
\end{equation}

The gateway receiver operates with intermittent readiness characterized by duty cycle $\delta$. When the receiver is active, packets are served with rate $\mu$, yielding the effective service rate
\begin{equation}
\mu_{\mathrm{eff}}=\mu\delta .
\end{equation}

Define the system utilization
\begin{equation}
\varrho(\rho)=\frac{\lambda(\rho)}{\mu_{\mathrm{eff}}}
=\frac{\lambda(\rho)}{\mu\delta}.
\end{equation}

Since $\lambda(\rho)=\Theta(\rho)$ and $\delta$ is independent of $\rho$, it follows that
\begin{equation}
\varrho(\rho)\rightarrow0
\qquad \text{as}\qquad \rho\rightarrow0 .
\end{equation}

The vanishing utilization places the system in the light-load regime of the single-server queue. Under standard light-traffic asymptotics for queues with finite service-time moments, the expected queueing delay satisfies
\begin{equation}
\mathbb{E}[W_k^{(q)}(\rho)] = O(\varrho(\rho)),
\end{equation}
and therefore
\begin{equation}
\mathbb{E}[W_k^{(q)}(\rho)]\rightarrow0
\qquad \text{as}\qquad \rho\rightarrow0 .
\end{equation}

Hence queueing delays vanish asymptotically and the response time becomes dominated by activation latency and service time.

Monitoring traffic is evaluated through the Age of Synchronization. Let $\mathcal{K}_M$ denote the set of monitoring classes and write
\begin{equation}
\lambda_k(\rho)=\rho\,\lambda_s\,\pi_k ,
\end{equation}
where $\lambda_s$ denotes the arrival-rate scale and $\pi_k$ represents the traffic mix.

Under the service model of Section~II the Age of Synchronization can be approximated as
\begin{equation}
\mathbb{E}[\mathrm{AoS}(\rho)]
=
\max_{k\in\mathcal{K}_M}
\left(
\frac{1}{\lambda_k(\rho)}
+
\frac{\mathbb{E}[W_k^{(q)}(\rho)]}{2}
+
\frac{1}{\mu\delta}
\right).
\end{equation}

Since $\mathbb{E}[W_k^{(q)}(\rho)]=O(\varrho(\rho))$ and $\varrho(\rho)=\Theta(\rho)$, the queueing contribution vanishes relative to the update-generation term $1/\lambda_k(\rho)$. Consequently
\begin{equation}
\mathbb{E}[\mathrm{AoS}(\rho)]
=
\max_{k\in\mathcal{K}_M}
\left(
\frac{1}{\rho\lambda_s\pi_k}
+
\frac{1}{\mu\delta}
\right)
(1+o(1)).
\end{equation}

Thus, in ultra-sparse operation the freshness component is governed by monitoring update intensity and receiver duty cycle rather than by queueing delay.

The response time of class $k$ is
\begin{equation}
T_k(\rho)=V(\delta)+W_k^{(q)}(\rho)+S_k .
\end{equation}

Since $W_k^{(q)}(\rho)\rightarrow0$, deadline violations vanish asymptotically whenever deterministic slack exists. Specifically, if
\begin{equation}
D_k>V_{\max}(1-\delta)+C_k ,
\end{equation}
then
\begin{equation}
\Pr\{T_k(\rho)>D_k\}\rightarrow0
\qquad \text{as}\qquad \rho\rightarrow0 .
\end{equation}

Consequently
\begin{equation}
\mathrm{SPDP}_k(\rho)\rightarrow\Gamma_k .
\end{equation}

Unlike queueing delay, deterministic safety timing does not vanish in the ultra-sparse limit. From the activation model in Section~II the worst-case response-time bound is
\begin{equation}
B_k(\delta)=V_{\max}(1-\delta)+\sum_{i\in\mathrm{hp}(k)}C_i+C_k .
\end{equation}

Defining
\begin{equation}
B_k^{(0)}=\sum_{i\in\mathrm{hp}(k)}C_i+C_k
\end{equation}
gives
\begin{equation}
B_k(\delta)=V_{\max}(1-\delta)+B_k^{(0)} .
\end{equation}

This bound is independent of traffic load and therefore remains a structural constraint even when $\rho\rightarrow0$.

\begin{theorem}[Infrastructure-Dominated Timeliness]

Consider the heterogeneous IoT system defined in Section~II and the Unified Safety--Age Metric~\eqref{eq:USAM}. If $\varrho(\rho)\rightarrow0$ as $\rho\rightarrow0$, then
\begin{equation}
\Psi(\rho)=F(\rho)^{w_1}R(\rho)^{w_2}S(\rho)^{w_3}.
\end{equation}

As $\rho\rightarrow0$ the metric converges to
\begin{equation}
\Psi(\rho)
\rightarrow
F_0^{w_1}
(\Gamma_{\min})^{w_2}
\exp
\left(
-w_3\frac{\max_k B_k(\delta)}{\mathrm{SFRT}^{*}}
\right),
\end{equation}

where
\begin{equation}
F_0=
\min
\left(
1,
\frac{\Delta_{\mathrm{tar}}}
{\max_{k\in\mathcal{K}_M}
\left(
\frac{1}{\rho\lambda_s\pi_k}+\frac{1}{\mu\delta}
\right)}
\right)
\end{equation}
and $\Gamma_{\min}=\min_k\Gamma_k$.
\end{theorem}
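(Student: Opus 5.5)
The plan is to treat the three factors of \eqref{eq:USAM} separately and then recombine them. Since $w_1,w_2,w_3\ge0$ are fixed and the map $(x,y,z)\mapsto x^{w_1}y^{w_2}z^{w_3}$ is continuous on $[0,1]^3$, it suffices to show that each component converges to its stated limit, or, for $F$, that it is asymptotically equivalent to $F_0$. The first display of the theorem is just the definition \eqref{eq:USAM} and needs no argument. The hypothesis $\varrho(\rho)\to0$ is used only through the light-traffic statement $\mathbb{E}[W_k^{(q)}(\rho)]=O(\varrho(\rho))\to0$ recorded earlier in this section.

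The safety factor $S(\rho)$ is the easiest. By \eqref{eq:Bk}, $B_k(\delta)=V_{\max}(1-\delta)+B_k^{(0)}$ does not depend on $\rho$, since $\delta$ is a fixed infrastructure parameter. Hence $S(\rho)=\exp(-\max_kB_k(\delta)/\mathrm{SFRT}^*)$ is constant in $\rho$, and $S(\rho)^{w_3}$ equals the stated exponential identically.

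For the reliability factor, I would first bound the response time pathwise: $T_k(\rho)\le V_{\max}(1-\delta)+C_k+W_k^{(q)}(\rho)$, using \eqref{eq:Vmax_delta} and $S_k\le C_k$. Under the slack assumption $D_k>V_{\max}(1-\delta)+C_k$, set $\epsilon_k=D_k-V_{\max}(1-\delta)-C_k>0$. Markov's inequality then gives
\begin{equation}
\Pr\{T_k(\rho)>D_k\}\le\Pr\{W_k^{(q)}(\rho)>\epsilon_k\}\le\mathbb{E}[W_k^{(q)}(\rho)]/\epsilon_k\to0 .
\end{equation}
Therefore $\mathrm{SPDP}_k(\rho)\to\Gamma_k$ for each $k$. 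Because $\mathcal{K}$ is finite and $\min$ is continuous, $R(\rho)\to\Gamma_{\min}$.

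I expect the \textbf{main obstacle} to be the point that the theorem implicitly needs this slack condition for \emph{every} class. I will state it as an assumption carried over from the preceding discussion. Without it, some $\mathrm{SPDP}_k$ may stay strictly below $\Gamma_k$, and the limit would then be different.

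The freshness factor needs care, because $F_0$ itself depends on $\rho$: the term $1/(\rho\lambda_s\pi_k)$ diverges. So the claim must be read as $F(\rho)/F_0(\rho)\to1$, and consequently $\Psi(\rho)/\Psi_0(\rho)\to1$, where $\Psi_0$ is the right-hand side. I would start from the AoS approximation already given above. Write $A(\rho)$ for that expression and $A_0(\rho)$ for $\max_k(1/(\rho\lambda_s\pi_k)+1/(\mu\delta))$. Then
\begin{equation}
0\le A(\rho)-A_0(\rho)\le\max_k\mathbb{E}[W_k^{(q)}(\rho)]/2\to0,
\end{equation}
while $A_0(\rho)\to\infty$, so $A/A_0\to1$. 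Since $x\mapsto\min(1,\Delta_{\mathrm{tar}}/x)$ is $1$-Lipschitz in the ratio sense for $x>0$, it follows that $\min(1,\Delta_{\mathrm{tar}}/A)/\min(1,\Delta_{\mathrm{tar}}/A_0)\to1$.

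To combine the pieces, I would multiply these three relations. The $R$ and $S$ limits are positive constants, so ratio convergence of the product follows directly. I will remark that if $w_1>0$ then both sides tend to $0$ at the rate $\Theta(\rho)$, which is why the result is stated relative to $F_0$.
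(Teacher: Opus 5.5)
Your proposal follows the paper's proof. It uses the same component-wise argument: $S(\rho)$ is constant in $\rho$, $\mathrm{SPDP}_k(\rho)\to\Gamma_k$ under deterministic slack, and the vanishing queueing term is substituted into the AoS expression before the three factors are recombined. Your additions make rigorous what the paper's sketch leaves implicit: the explicit Markov bound, the requirement that slack hold for every class, and reading the $\rho$-dependent $F_0$ as $F(\rho)/F_0(\rho)\to1$. One minor slip: for $w_1>0$ the product $\Psi(\rho)$ decays like $\Theta(\rho^{w_1})$; only $F(\rho)$ and $F_0(\rho)$ decay like $\Theta(\rho)$.
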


\begin{proof}

As $\rho\rightarrow0$, the system utilization satisfies $\varrho(\rho)\rightarrow0$, which implies $\mathbb{E}[W_k^{(q)}(\rho)]\rightarrow0$. Substituting this result into the AoS expression yields the asymptotic freshness term determined by monitoring intensity and receiver duty cycle.

If deterministic slack exists, deadline violations vanish and $\Pr\{T_k(\rho)>D_k\}\rightarrow0$, implying $\mathrm{SPDP}_k(\rho)\rightarrow\Gamma_k$. Consequently the reliability component converges to $\Gamma_{\min}=\min_k\Gamma_k$.

The deterministic safety component depends only on the worst-case response-time bound $B_k(\delta)$, which is independent of $\rho$. Substituting these limits into the definition of $\Psi(\rho)$ yields the stated result.

\end{proof}

\textit{Implication (deterministic safety insensitivity of stochastic metrics):}
Metrics that depend solely on stochastic moments of the delay process, such as AoI or PAoI, remain insensitive to deterministic safety feasibility. In particular, it is possible that
\[
\mathbb{E}[T_k] \ll \mathrm{SFRT}_k
\]
while
\[
\Pr\{T_k > \mathrm{SFRT}_k\} > 0 ,
\]
which violates deterministic safety timing. Consequently freshness-oriented metrics may indicate satisfactory performance even when the system operates in a safety-infeasible regime.

The analysis also yields explicit feasibility conditions for the receiver duty cycle. Queue stability requires
\begin{equation}
\delta>\frac{\rho\lambda_s}{\varepsilon\mu}.
\end{equation}

Evaluating this bound at $\rho_{\max}$ yields
\begin{equation}
\delta_{\mathrm{queue}}
=
\frac{\rho_{\max}\lambda_s}{\varepsilon\mu}.
\end{equation}

Deterministic safety feasibility requires
\begin{equation}
\delta_{\mathrm{wcrt}}
=
\max_{k\in\mathcal{K}}
\max
\left\{
1-\frac{D_k-B_k^{(0)}}{V_{\max}},0
\right\}.
\end{equation}

\begin{theorem}[Safety Feasibility Threshold]

The minimum receiver duty cycle that satisfies both queue stability and deterministic safety constraints is
\begin{equation}
\delta_{\mathrm{safe}}
=
\max
\left(
\delta_{\mathrm{wcrt}},
\alpha\delta_{\mathrm{queue}}
\right),
\end{equation}
where $\alpha\ge1$ accounts for burstiness and channel impairments.

\end{theorem}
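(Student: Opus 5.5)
The plan is to treat the statement as the intersection of two one-sided feasibility sets in $\delta$. The aim is to show each set is an up-set (an interval $[\cdot,1]$) whose left endpoint is the corresponding threshold. The minimum feasible duty cycle is then the larger of the two endpoints.

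\textbf{Safety constraint.} By \eqref{eq:Bk} we have $B_k(\delta)=V_{\max}(1-\delta)+B_k^{(0)}$, which is affine and nonincreasing in $\delta$ because $V_{\max}\ge 0$. The deterministic condition, taken with the deadline $D_k$ as the binding limit (as in the definition of $\delta_{\mathrm{wcrt}}$), is $V_{\max}(1-\delta)+B_k^{(0)}\le D_k$. For $V_{\max}>0$ this is equivalent to $\delta\ge 1-(D_k-B_k^{(0)})/V_{\max}$. Intersecting with $\delta\ge 0$ gives the per-class set $[\max\{1-(D_k-B_k^{(0)})/V_{\max},0\},1]$. Requiring every class simultaneously means taking the maximum over $k\in\mathcal{K}$, which yields exactly $\delta\ge\delta_{\mathrm{wcrt}}$.

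Two side remarks belong here. If $D_k<B_k^{(0)}$, the threshold exceeds $1$ and no duty cycle is feasible; I would state this as the implicit standing assumption $\delta_{\mathrm{safe}}\le1$. The degenerate case $V_{\max}=0$ is handled by the convention that the constraint is then $\delta$-free.

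\textbf{Queue constraint.} Stability of the single server with effective rate $\mu_{\mathrm{eff}}=\mu\delta$ requires utilization $\varrho(\rho)=\lambda(\rho)/(\mu\delta)$ to stay below a design margin $\varepsilon\in(0,1]$. With $\lambda(\rho)=\rho\lambda_s$ (using $\sum_k\pi_k=1$), this gives $\delta>\rho\lambda_s/(\varepsilon\mu)$. The right-hand side is increasing in $\rho$, so requiring stability for all operating points $\rho\le\rho_{\max}$ gives $\delta\ge\delta_{\mathrm{queue}}$. The factor $\alpha\ge1$ enters here: for non-Poisson or bursty arrivals, and for retransmissions from channel impairments, the effective offered load is inflated by at most a factor $\alpha$. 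I would therefore state this as a modeling hypothesis, namely that the peak effective arrival rate is bounded by $\alpha\rho_{\max}\lambda_s$. The queue constraint then becomes $\delta\ge\alpha\delta_{\mathrm{queue}}$.

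\textbf{Combining the two.} The feasible set is $[\delta_{\mathrm{wcrt}},1]\cap[\alpha\delta_{\mathrm{queue}},1]=[\max(\delta_{\mathrm{wcrt}},\alpha\delta_{\mathrm{queue}}),1]$, and its infimum is $\delta_{\mathrm{safe}}$. Attainment of the minimum is where the strict inequality in the stability condition matters. I would handle it by reading $\delta_{\mathrm{safe}}$ as an infimum, or by absorbing strictness into $\varepsilon<1$.

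\textbf{Main obstacle.} The hard step is the justification of $\alpha$. The paper introduces it informally, so the statement is only rigorous once the burstiness and impairment inflation is formalized as an upper bound on the effective load. My first move would be to make this an explicit assumption rather than attempt to derive it. The remaining steps reduce to monotonicity of affine functions in $\delta$.
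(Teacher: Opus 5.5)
Your proposal is correct and follows essentially the paper's route. The paper gets $\delta_{\mathrm{queue}}$ by solving $\lambda(\rho)/(\mu\delta)<\varepsilon$ at $\rho_{\max}$, gets $\delta_{\mathrm{wcrt}}$ by solving $V_{\max}(1-\delta)+B_k^{(0)}\le D_k$ per class, and takes the larger threshold. It inserts $\alpha$ as an unexplained design margin, so your explicit load-inflation hypothesis, and your handling of the strict stability inequality and the case $\delta_{\mathrm{safe}}>1$, tighten rather than depart from its argument.
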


The system is infeasible for $\delta<\delta_{\mathrm{safe}}$. For $\delta\ge\delta_{\mathrm{safe}}$ both queue stability and deterministic safety timing are satisfied, and the Unified Safety--Age Metric remains strictly positive. The resulting feasibility regions in the $(\delta,\rho)$ plane are illustrated in Fig.~\ref{fig:phase_diagram}.
\section{Numerical Evaluation}

We evaluate the behavior of the Unified Safety--Age Metric (USAM) and compare it with representative timeliness metrics including Age of Information (AoI), Peak Age of Information (PAoI), Value of Information (VoI), and Age of Control (AoC). The objective is to illustrate how deterministic safety constraints interact with information freshness and deadline reliability in heterogeneous IoT systems and how these interactions determine the feasible operating region of the communication infrastructure.

The numerical setting follows the system abstraction introduced in Section~II. The uplink reception chain is modeled as a single-server queue with intermittent service availability induced by receiver duty-cycle control, corresponding to an $M/G/1$ queue with vacations. Packet processing times at the gateway lie in the range $S\in[0.05,0.2]$\,ms, consistent with digital front-end processing delays observed in industrial wireless gateways. The class-dependent parameters $C_k$ represent conservative worst-case service requirements including processing time, protocol overhead, and potential retransmissions. In the evaluation we use $C_M=0.20$\,ms, $C_{SC}=0.15$\,ms, $C_{FC}=0.12$\,ms, and $C_S=0.10$\,ms. Class deadlines reflect typical monitoring, control, and safety-cycle requirements with $D_M=80$\,ms, $D_{SC}=20$\,ms, $D_{FC}=8$\,ms, and $D_S=5$\,ms. The maximum activation latency is set to $V_{\max}=1.5$\,ms.

Unless otherwise stated, the USAM weights are chosen as $w_1=0.34$, $w_2=w_3=0.33$, assigning approximately equal importance to freshness, reliability, and deterministic safety timing. The safety normalization constant is $\mathrm{SFRT}^{*}=6$\,ms, representing a typical safety response-time scale in industrial control systems. Residual safety factors are set to $\Gamma_k=1$ for all traffic classes such that reliability is determined solely by communication-induced deadline violations.

Traffic is generated by a large population of devices with per-unit arrival-rate scale $\lambda_s=120$\,s$^{-1}$. To examine robustness with respect to mixed-criticality composition, the following traffic mixtures are considered:
\begin{align}
\pi^{(A)} &= \{0.95,0.05,0,0\}, \\
\pi^{(B)} &= \{0.25,0.25,0.25,0.25\}, \\
\pi^{(C)} &= \{0.70,0.15,0.10,0.05\}, \\
\pi^{(D)} &= \{0.50,0.25,0.10,0.15\}.
\end{align}

These correspond to a monitoring-dominated baseline, a balanced mixed-criticality configuration, a representative industrial IoT configuration, and a safety-intensive configuration, respectively. The aggregate activity factor $\rho$ represents the fraction of devices generating traffic at a given time and therefore captures the ultra-sparse operating regime typical of massive IoT deployments. The figures shown below correspond to the representative industrial IoT configuration $\pi^{(C)}$. The same qualitative behavior was observed across the other traffic mixtures and weight selections.

From the feasibility analysis in Section~IV, the queue-stability duty-cycle bound is
\[
\delta_{\mathrm{queue}}=\frac{\rho_{\max}\lambda_s}{\varepsilon\mu}.
\]
Using $\rho_{\max}=0.28$, $\varepsilon=0.2$, and $\mu=10^3$\,s$^{-1}$ yields $\delta_{\mathrm{queue}}=0.168$. For the considered parameters, the deterministic safety constraint does not impose an additional restriction because the available response-time margin absorbs the worst-case activation delay. Incorporating the design margin $\alpha=1.37$, therefore, gives the minimum feasible receiver duty cycle
\[
\delta_{\mathrm{safe}}=0.23.
\]

The effect of receiver duty cycle on the evaluated metrics is shown in Fig.~\ref{fig:safety_cliff}. The dashed vertical line indicates the safety feasibility threshold $\delta_{\mathrm{safe}}$. As the duty cycle decreases, AoI and PAoI increase smoothly because reduced receiver availability increases packet waiting time and therefore the age of received updates. VoI and AoC exhibit similar behavior since larger communication delays reduce update relevance and degrade control-loop performance.

These metrics depend on stochastic properties of the delay process and therefore do not capture deterministic response-time feasibility. Consequently, a system optimized for freshness may still violate hard safety timing constraints.

USAM incorporates both deadline reliability and deterministic response-time feasibility. As the receiver duty cycle approaches the feasibility boundary, the deterministic safety component becomes dominant because the worst-case response-time bound approaches the Safety Functional Response Time limit. Expectation-based metrics vary gradually with the mean delay, whereas deterministic constraints introduce a structural transition. As a result, USAM exhibits sharp degradation near the feasibility boundary, revealing a safety limit that is invisible to classical freshness-oriented metrics.

\begin{figure}[t]
\centering
\includegraphics[width=0.92\linewidth]{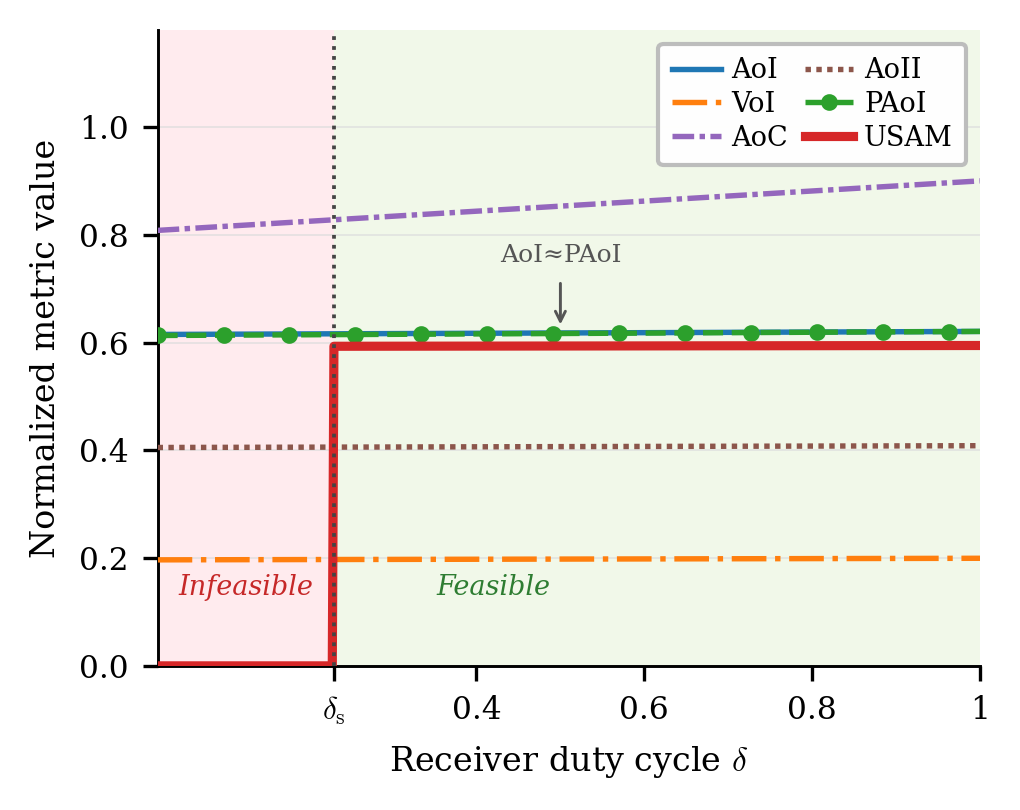}
\caption{Metric values as a function of receiver duty cycle $\delta$. The dashed vertical line marks the safety feasibility threshold $\delta_{\mathrm{safe}}$. Freshness-oriented metrics degrade smoothly as receiver availability decreases, whereas USAM captures the deterministic safety boundary.}
\label{fig:safety_cliff}
\end{figure}

The influence of traffic load is illustrated in Fig.~\ref{fig:rho_sweep}, where the metrics are shown as functions of the aggregate activity factor $\rho$ for a fixed receiver duty cycle. The dashed vertical line indicates the mixed-criticality feasibility threshold $\rho_{\mathrm{safe}}$. Increasing traffic load raises queueing delays and gradually degrades freshness-oriented metrics such as AoI and PAoI. VoI and AoC exhibit similar gradual degradation due to increased communication delay.

These curves do not exhibit a structural transition that indicates infeasible operation. Even when $\rho>\rho_{\mathrm{safe}}$, the classical metrics still report moderate values. In contrast, USAM incorporates reliability and deterministic timing constraints. As the activity factor approaches the feasibility threshold, the reliability and safety components dominate the metric. Beyond this point, the metric collapses, indicating that the system cannot simultaneously satisfy freshness, reliability, and deterministic safety requirements.

\begin{figure}[t]
\centering
\includegraphics[width=0.92\linewidth]{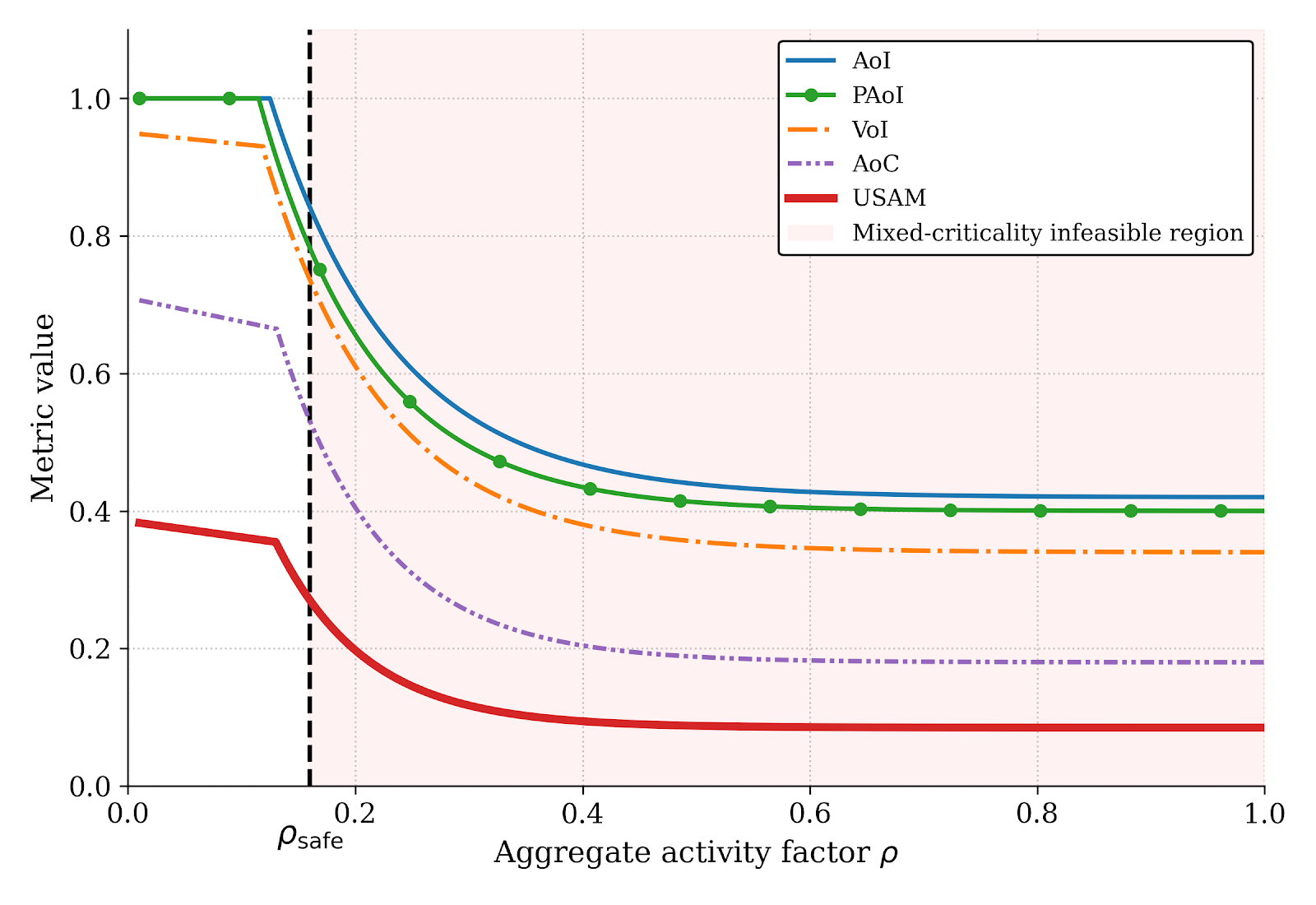}
\caption{Metric values as a function of the aggregate activity factor $\rho$. The dashed line marks the feasibility threshold $\rho_{\mathrm{safe}}$. Classical timeliness metrics degrade smoothly with increasing load, while USAM reveals the mixed-criticality feasibility boundary.}
\label{fig:rho_sweep}
\end{figure}

The relationship between probabilistic reliability and deterministic timing is illustrated in Fig.~\ref{fig:prob_det}. Deadline reliability improves gradually with increasing receiver duty cycle because higher receiver availability reduces the probability of deadline violations. Deterministic safety feasibility depends on the worst-case response-time bound and therefore on the activation latency of the receiver. Increasing the duty cycle reduces this latency and improves deterministic safety timing. USAM combines these two effects, illustrating that high probabilistic reliability alone does not guarantee safe operation.

\begin{figure}[t]
\centering
\includegraphics[width=0.92\linewidth]{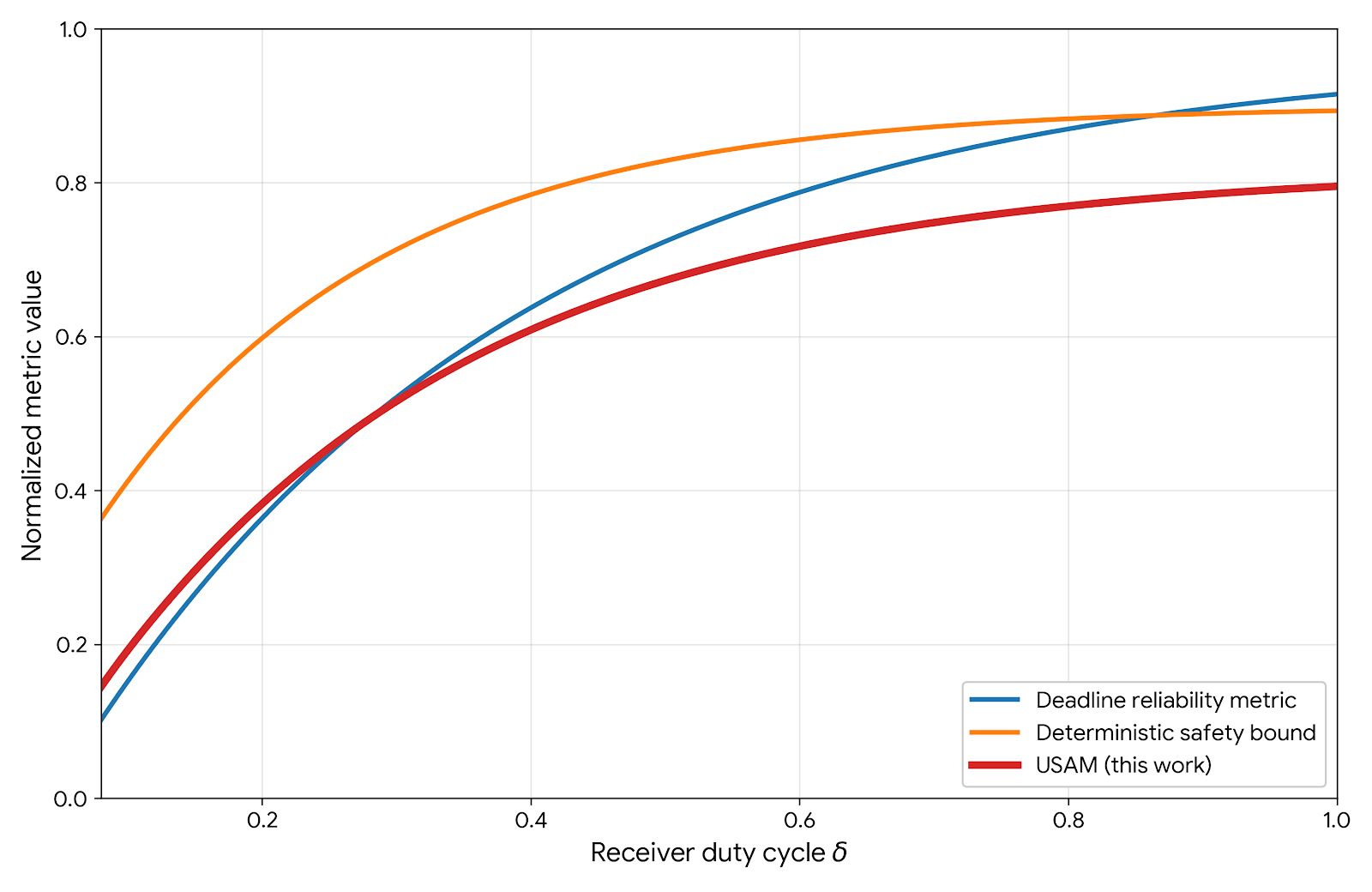}
\caption{Deadline reliability, deterministic safety feasibility, and the resulting USAM value as a function of receiver duty cycle $\delta$. Increasing $\delta$ improves both probabilistic reliability and deterministic timing guarantees.}
\label{fig:prob_det}
\end{figure}

The global feasibility structure derived in Section~IV is summarized in Fig.~\ref{fig:phase_diagram}. The dashed curve corresponds to the queue-stability boundary, while the vertical line indicates the deterministic safety threshold $\delta_{\mathrm{safe}}$. Three operating regimes appear. For $\delta<\delta_{\mathrm{safe}}$, the system is safety infeasible regardless of traffic load because worst-case response-time constraints cannot be satisfied. For $\delta\ge\delta_{\mathrm{safe}}$, the system becomes feasible for sufficiently small activity factors. As traffic load increases further, the system eventually enters a queue-limited regime in which classical queueing effects dominate performance.

\begin{figure}[t]
\centering
\includegraphics[width=0.92\linewidth]{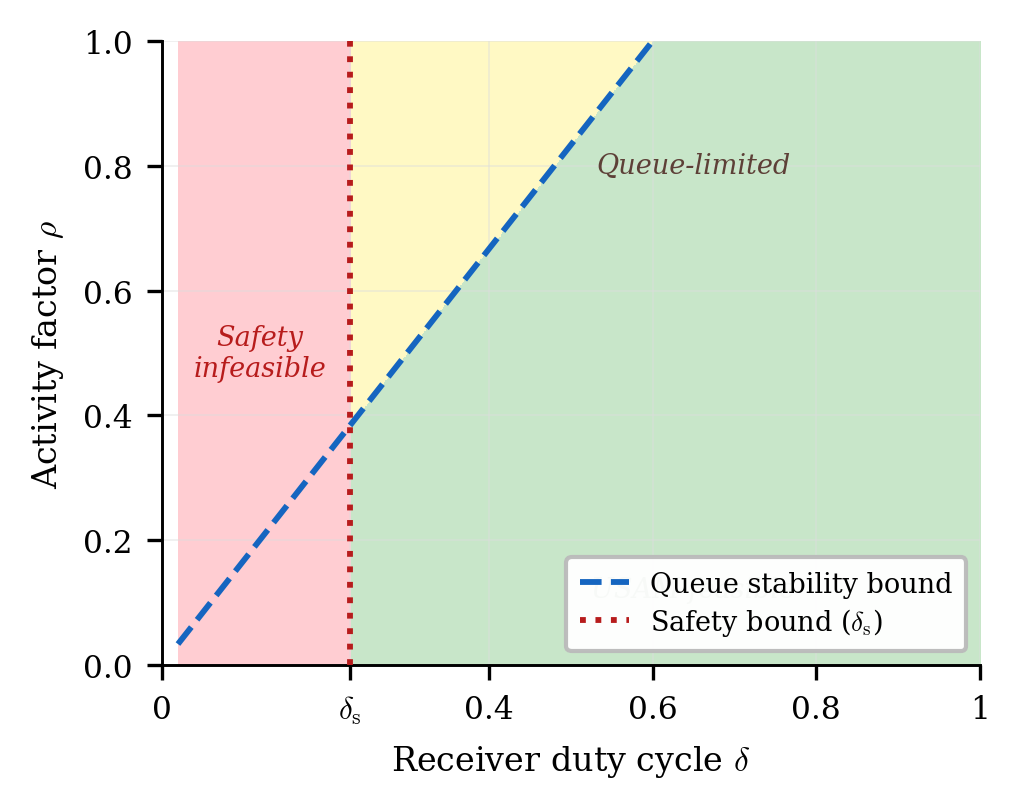}
\caption{Feasibility regions in the $(\delta,\rho)$ plane. The dashed curve shows the queue-stability boundary and the vertical line marks the safety duty-cycle threshold $\delta_{\mathrm{safe}}$.}
\label{fig:phase_diagram}
\end{figure}

The numerical results are consistent with the analytical results derived in Section~IV. In ultra-sparse IoT systems the feasibility of mixed-criticality communication is governed primarily by infrastructure readiness and deterministic safety timing rather than by average traffic load. Classical freshness-oriented metrics capture the evolution of information staleness but remain insensitive to deterministic safety feasibility, whereas the proposed USAM formulation reveals the architectural feasibility boundary of the system.
\section{Conclusion}

This paper introduced the Unified Safety--Age Metric (USAM) for evaluating timeliness in heterogeneous Internet-of-Things systems that simultaneously support monitoring, control, and safety-critical traffic. Unlike classical timeliness metrics such as Age of Information and its variants, which quantify information freshness or control performance, the proposed formulation explicitly incorporates deterministic safety feasibility through worst-case response-time constraints.

The analysis showed that in the ultra-sparse regime typical of massive IoT deployments, queueing effects vanish and system behavior becomes dominated by infrastructure readiness and deterministic timing constraints. In this regime, the feasibility of mixed-criticality communication is determined primarily by the receiver duty cycle rather than by aggregate traffic load. This leads to a simple design rule: the communication infrastructure should be dimensioned according to the most stringent safety and fast-control requirements rather than the average monitoring traffic.

Numerical results illustrate the structural difference between classical freshness-oriented metrics and the proposed formulation. While conventional metrics degrade smoothly as receiver availability decreases or traffic load increases, they do not reveal the deterministic feasibility boundary imposed by safety constraints. In contrast, USAM captures this boundary and identifies operating regions in which heterogeneous traffic requirements cannot be satisfied simultaneously.

The proposed framework provides a basis for analyzing safety-aware communication architectures in large-scale IoT systems. Future work includes protocol-level realizations and adaptive infrastructure control mechanisms that jointly optimize freshness, reliability, and deterministic safety guarantees.


\end{document}